


\documentclass[final,1p,times]{elsarticle}


\usepackage{amssymb}
 \usepackage{amsthm}
\newtheorem{Theorem}{Theorem}[section]
\newtheorem{Definition}{Definition}[section]
\newtheorem{lemma}{Lemma}[section]
\newtheorem{Corollary}{Corollary}[section]
\newtheorem{Fact}{Fact}[section]





\journal{Information Processing Letters}

\begin{document}

\begin{frontmatter}



\title{On the Complexity of Edge Packing and Vertex Packing}

\author{Parvathy K. N.}
\ead{paranns@gmail.com}
\author{˜Sameera Muhamed Salam}
\ead{shemi.nazir@gmail.com}
\author{Sudeep K. S.}
\ead{sudeep@nitc.ac.in}
\author{K. ˜Murali Krishnan}
\ead{kmurali@nitc.ac.in}
\address{Department of Computer Science and Engineering, National Institute of Technology Calicut, Kerala, India 673601}
\begin{abstract}
This paper studies the computational complexity of the Edge Packing problem and the Vertex Packing problem.  The edge packing 
problem (denoted by $\overline{EDS}$) and the vertex packing problem (denoted by $\overline{DS} $) are 
linear programming duals of the edge dominating set problem and the dominating set problem respectively.  
It is shown that these two problems are equivalent to the set packing problem with respect to 
hardness of approximation and parametric complexity.  It follows that $\overline{EDS}$
and $\overline{DS}$ cannot be approximated asymptotically within a factor of
$O(|V|^{\frac{1}{2}-\epsilon})$ for any $\epsilon>0$ unless $NP=ZPP$ where, $V$ is the set of vertices of the given graph.   
This is in
contrast with the fact that the edge dominating set problem is 2-approximable whereas the dominating set problem
 is known to have an
$O(\log$ $|V|)$ 
approximation algorithm.
It also follows from our proof that $\overline{EDS}$ and $\overline{DS}$ are $W[1]$-complete under standard parameterization. 
\end{abstract}

\begin{keyword}
Parametric complexity \sep Approximation hardness \sep Edge packing \sep Vertex Packing

\end{keyword}

\end{frontmatter}


\section{Introduction}

The hardness of approximation and parametric complexity of the dominating set problem and the edge dominating 
set problem are well 
studied in literature.  In this paper we 
investigate the approximability and parametric complexity of linear programming (LP) duals of these problems.

Given a graph $G=(V,E)$, a set of edges $M\subseteq E$ is called
an \textit{edge dominating set} (EDS) if every edge in $E \setminus M$
is adjacent to at least one edge in $M$.  The minimum edge dominating set problem
asks for such subset $M$ of minimum cardinality.  It is NP-complete
even for planar and
bipartite graphs \cite{yannakakis1980edge}.  
It is also known that this problem is 2-approximable \cite{yannakakis1980edge, chlebik2003approximation} and
fixed parameter tractable \cite{fernau2006edge}, where the
parameter is the size of the edge dominating set (called the standard parameterization).

 \textit{Dominating set} (DS) of a graph $G=(V,E)$ is a subset $V'\subseteq V$
such that for every $u\in V\setminus V'$, there exists at least one $v\in V'$ for
which $(u,v)\in E$.  A minimum dominating set is a dominating set
of minimum cardinality.  The dominating set problem is
proved to be equivalent to the set cover problem (SC) by Kann \cite{kann1992approximability} and hence is $1+\log(|V|)$
approximable \cite{feige1998threshold, chlebik2004approximation}.  With respect to parametric complexity 
it is W[2]-complete under standard parameterization\cite{cesati2001compendium}.   
 
This paper studies the edge packing problem ($\overline{EDS}$) and the vertex packing problem ($\overline{DS}$) 
which are the LP duals of EDS and DS respectively.  We show
that these problems are equivalent to the set packing problem.
  Consequently, edge packing problem is $\sqrt{|E|}$ approximable
 and vertex packing problem is $\sqrt{|V|}$ approximable where,  $E$ and 
$V$ are the set of edges and vertices respectively in the given graph.  Moreover, they cannot be approximated 
asymptotically within a factor of $O(|V|^{\frac{1}{2}-\epsilon})$ for
any $\epsilon>0$ unless $NP=ZPP$.
As all reductions we propose are parameter preserving, they are 
FPT reductions \cite{chen1993fixed} and conclude that the edge packing problem and the vertex packing problem are W[1]-complete.
 Section 2 introduces the problems discussed in this paper  
and Section 3 presents the reductions.  
\section{Preliminaries}
Let $G=(V,E)$ be a graph.  Let $e\in E$ and $v\in V$.  Define $N(e)$=\{$e$\} $\cup$ \{$e^\prime \in E : e'$  \textit{share 
an endpoint with $e$}\}  
and $N(v)$= \{$v$\} $\cup$ \{$v'\in V:(v',v)\in E$ \}.

\begin{Definition}[$\overline{DS}$]

Given a graph $G=(V,E)$, the Vertex Packing problem ($\overline{DS}$) is to find a subset $V'$ of $V$ of maximum cardinality
 such that no two vertices in $V'$ have a common neighbour in $V$.  That is, 
if $v_{i},v_{j}\in V'$, then $N(v_{i})\cap N(v_{j}) = \phi$.
\end{Definition}

\begin{Definition}[$\overline{EDS}$)]
Given a graph $G=(V,E)$, the Edge Packing problem ($\overline{EDS}$) is to find a subset $E'$ of $E$ of maximum cardinality 
such that no two edges in $E'$ have a common neighbour in $E$.  That is, if $e_{i},e_{j}\in E'$, then $N(e_{i})\cap N(e_{j}) = \phi$. 
\end{Definition}
\begin{Definition}[SC]
Given a universal set $U=\{a_1,a_2,.....a_n\}$ and a collection of subsets $S=\{s_{1},s_{2},....,s_{m}\}$
of $U$, the set cover problem is to find a subset $S'\subseteq S$ of minimum cardinality such that every element
in $U$ belongs to at least one member of $S'$.
\end{Definition}

\begin{Definition}[$\overline{SC}$]
Given a universal set $U=\{a_1,a_2,.....a_n\}$ and a collection of subsets $S=\{s_{1},s_{2},....s_{m}\}$, the problem 
$\overline{SC}$ requires to find a set $U' \subseteq U$ of maximum cardinality such that every element in 
$U'$ occurs in at most one member of $S$.

\end{Definition}
\begin{Definition}[Set packing problem]

Given a universal set $U=\{a_1,a_2,.....a_n\}$ and a collection of subsets $S=\{s_{1},s_{2},....,s_{m}\}$
of $U$, the set packing problem is to find a collection of disjoint sets $S'\subseteq S$ of maximum cardinality.\end{Definition}

It is easy to see that $\overline{DS}$, $\overline{EDS}$ and $\overline{SC}$ are LP duals of the dominating
set problem, the edge dominating set problem and the set cover problem respectively.  The following is known about the 
set packing problem.
\begin{Fact} \label{lema:11}
    \begin{enumerate}
\item
  The set packing problem is $\sqrt{|U|}$ approximable \cite{crescenzi1998compendium}.
\item For any $\epsilon>0$, the set packing problem is not approximable within a factor $|U|^{\frac{1}{2}-\epsilon}$ 
unless $P=NP$ and $|S|^{1-\epsilon}$ unless $NP=ZPP$  \cite{halldórsson2000independent}.
\item The set packing problem is W[1]-complete under standard parameterization\cite{cesati2001compendium}.
 \end{enumerate}
\end{Fact}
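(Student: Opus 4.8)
The plan is to derive all three items from the tight correspondence between set packing and maximum independent set on the \emph{intersection graph} $H$: the vertices of $H$ are the sets $s_1,\dots,s_m$, and $s_i$ is joined to $s_j$ exactly when $s_i\cap s_j\neq\phi$. A family of pairwise disjoint sets is then precisely an independent set of $H$, so the optimum packing value equals the independence number of $H$. This equivalence is what lets me import the known extremal and complexity-theoretic facts about independent set and clique into the set packing setting.

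For item~(1) I would analyse the greedy rule that repeatedly selects a surviving set of \emph{minimum} cardinality, adds it to the packing, and discards every set meeting it, halting only when no set survives. Let $g_1,\dots,g_t$ be the selected sets and let $k_i$ denote the number of sets of a fixed optimum packing that are first eliminated at step $i$. Because the optimum sets are pairwise disjoint, at most $|g_i|$ of them can meet $g_i$, so $k_i\le|g_i|$; and since each such set was still available when $g_i$ was chosen, minimality forces its cardinality to be at least $|g_i|$. The optimum sets eliminated over the whole run are distinct and live in $U$, whence $\sum_i k_i|g_i|\le|U|$. Using $k_i\le|g_i|$ together with Cauchy--Schwarz then gives
\begin{equation}
\mathrm{OPT}=\sum_{i=1}^{t}k_i\le\sqrt{t}\,\sqrt{\textstyle\sum_i k_i^{2}}\le\sqrt{t}\,\sqrt{\textstyle\sum_i k_i|g_i|}\le\sqrt{t\,|U|},
\end{equation}
so $\mathrm{OPT}/t\le\sqrt{|U|/t}\le\sqrt{|U|}$, which is the claimed ratio.

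For items~(2) and~(3) I would use the standard vertex-to-set reduction in the reverse direction. Given a graph $G=(W,F)$, assign to each vertex $w$ the set consisting of the edges of $F$ incident with $w$ together with one private dummy element; two such sets intersect iff the corresponding vertices are adjacent, so packings of size $k$ correspond bijectively to independent sets of size $k$. Here $|S|=|W|$ and $|U|=|F|+|W|=O(|W|^2)$. Feeding this reduction the known result that independent set (equivalently clique) admits no $|W|^{1-\epsilon}$ approximation unless $NP=ZPP$ yields the $|S|^{1-\epsilon}$ bound at once, and substituting $|W|=\Omega(\sqrt{|U|})$ converts the same hardness into the $|U|^{\frac{1}{2}-\epsilon}$ bound, the $P=NP$ form following from the deterministic version of the clique inapproximability. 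The same reduction is polynomial and parameter preserving, so it is an FPT reduction establishing $W[1]$-hardness; membership in $W[1]$ is obtained in the forward direction, by mapping a packing instance to its intersection graph $H$ and invoking the $W[1]$-completeness of independent set under the solution-size parameterization.

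The step demanding the most care is item~(2): the underlying $|W|^{1-\epsilon}$ hardness of independent set and clique is a deep PCP-based result that I would cite rather than reprove, and the delicate bookkeeping is to track the two size parameters $|U|$ and $|S|$ through the reduction so that the exponents $\frac{1}{2}-\epsilon$ and $1-\epsilon$ emerge exactly, while recording precisely which of the assumptions $P=NP$ or $NP=ZPP$ each bound requires.
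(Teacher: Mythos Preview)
The paper does not prove this statement at all: it is recorded as a \emph{Fact} with citations to \cite{crescenzi1998compendium}, \cite{halldórsson2000independent}, and \cite{cesati2001compendium}, and is then used as a black box in the subsequent corollaries. Your proposal therefore goes well beyond what the paper does, supplying an actual argument where the paper simply appeals to the literature.

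That said, your sketch is essentially the standard derivation and is sound. The greedy analysis for item~(1) is correct; the key inequalities $k_i\le|g_i|$ and $\sum_i k_i|g_i|\le|U|$ hold for exactly the reasons you give, and the Cauchy--Schwarz step is clean. For items~(2) and~(3), the two-way correspondence with independent set via the intersection graph and the incidence-set construction is the right mechanism, and it is indeed parameter preserving, so the $W[1]$-completeness transfers in both directions. One small point to tighten: your derivation of the $|U|^{1/2-\epsilon}$ bound under the assumption $P\neq NP$ (rather than $NP\neq ZPP$) relies on a \emph{deterministic} $|W|^{1-\epsilon}$ inapproximability of independent set; you allude to this but should cite it explicitly (e.g.\ Zuckerman's derandomization), since H{\aa}stad's original bound only gives $NP=ZPP$, and with that assumption alone your reduction would yield both bounds under $NP=ZPP$ rather than matching the split stated in the Fact.
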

It is not hard to see that the LP dual of the set cover problem ($\overline{SC}$) and 
the set packing problem are equivalent.
We sketch a proof for this fact below for the sake of completeness.

\begin{Theorem} \label{theorem:1}
 $\overline{SC}$ and the set packing problem are equivalent.
\end{Theorem}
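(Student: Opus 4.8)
The plan is to show that one easily computed operation---transposing the element--set incidence relation---sends an instance of either problem to an instance of the other and sets up a value-preserving bijection between their feasible solutions; everything else then follows formally. Given an instance specified by $U=\{a_1,\dots,a_n\}$, $S=\{s_1,\dots,s_m\}$ and the incidences $a_j\in s_i$, define the transposed instance $(U^\sharp,S^\sharp)$ by taking one element $b_i\in U^\sharp$ for each $s_i\in S$ and one set $t_j\in S^\sharp$ for each $a_j\in U$, with $b_i\in t_j$ exactly when $a_j\in s_i$. Transposing twice returns the original instance up to renaming, so this operation is an involution and it suffices to analyse it once.

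The core step is the following claim for an arbitrary instance $(U,S)$: a subcollection $P\subseteq S$ is a valid set packing of $(U,S)$ if and only if $B=\{\,b_i : s_i\in P\,\}$ is a feasible solution of $\overline{SC}$ on $(U^\sharp,S^\sharp)$, and moreover $|P|=|B|$. For the equivalence, observe that a set $t_j\in S^\sharp$ contains two distinct elements $b_i,b_k\in B$ exactly when $a_j\in s_i\cap s_k$ for two distinct $s_i,s_k\in P$; hence the defining condition of $\overline{SC}$ for $B$---that no member of $S^\sharp$ contains more than one chosen element, i.e. the LP-dual constraint $\sum_{a\in s}x_a\le 1$---holds if and only if the members of $P$ are pairwise disjoint. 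Since $s_i\mapsto b_i$ is a bijection from $S$ onto $U^\sharp$, we get $|P|=|B|$. Applying this claim to $(U,S)$ itself gives the reduction set packing $\to\overline{SC}$, and applying it to the transposed instance $(U^\sharp,S^\sharp)$, whose transpose is $(U,S)$ again, gives the reduction $\overline{SC}\to$ set packing.

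It then remains only to package this as an equivalence. The construction runs in polynomial time (it transposes the incidence matrix), the optimum values of the two instances coincide, and any feasible solution of value $k$ to the transformed instance is carried back to one of value $k$ for the original; hence the reductions preserve the approximation ratio exactly and send the parameter $k$ to $k$, so they are in particular FPT reductions. I would also record the dimension swap $|U^\sharp|=|S|$ and $|S^\sharp|=|U|$, since the later sections must track which of the two size parameters plays which role when transferring the bounds of Fact~\ref{lema:11}.

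The one point needing a little care---more bookkeeping than genuine obstacle---is the cardinality count when the instance is degenerate: repeated sets, empty sets, or elements in no set. These are harmless (for example, empty sets of $S$ become elements of $U^\sharp$ that lie in no set of $S^\sharp$, and an optimum of either problem may simply include all of them), or one can assume without loss of generality that every set is non-empty and every element occurs in some set, after which $s_i\mapsto b_i$ and $a_j\mapsto t_j$ are honest bijections and the two counts agree on the nose.
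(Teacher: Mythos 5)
Your proposal is correct and is essentially the paper's own argument: both reduce via transposing the element--set incidence relation ($U'=S$, $s'_i=\{s_j : a_i\in s_j\}$) and map solutions back and forth with cardinality preserved. The only difference is presentational --- you prove one claim and invoke the involution property of the transpose, whereas the paper writes out the two directions separately with the same construction.
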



 
\begin{proof}
\textit{$\overline{SC}$ is reducible to the set packing problem:} Given an $\overline{SC}$ instance $T=(U,S)$ where, 
$U=\{a_{1},a_{2},..,a_{n}\}$
and a collection of subsets $S=\{s_{1},s_{2},....s_{m}\}$, we can construct
a set packing instance $T'=(U',S')$ such that $U'=\{s_{1},s_{2},..,s_{m}\}$
and $S'=\{s'_{1},s'_{2},....s'_{n}\}$ where, 
$s'_{i}=\{s_{j}:a_{i}\in s_{j}, 1\leq j\leq m\},$ for each $1\leq i\leq n$. 

Any set packing in $T'$ can be converted to a feasible solution
for $\overline{SC}$ in $T$ of the same size: If $S_1=\{s'_{1},s'_{2},...,s'_{k}\}$ is a set packing of 
size $k$ in $T'$, then 
$S_2=\{a_1,a_2,....a_k\}$ will be a feasible solution for $\overline{SC}$ of size $k$ in $T$.  
Consider any two arbitrary elements 
$s'_{i},s'_{j}$ from $S_1$.  This implies that $s'_{i}\cap s'_{j}=\phi$; which in turn implies that $a_i$ and $a_j$
do not simultaneously occur in any of the subsets in $S$ (by construction).  So it is safe to add $a_i$ and $a_j$ to  
$S_2$.  Hence $S_2$ is a feasible solution for $\overline{SC}$ of size $k$ in $T$. 

Similarly, any feasible $\overline{SC}$ in $T$ can be converted to a feasible set packing in $T'$ of same size: 
If $S_1=\{a_{1},a_{2},...,a_{k}\}$ is 
feasible for $\overline{SC}$ in $T$ with $|S_1|=k$, then $S_2=\{s'_1,s'_2,....s'_k\}$ will be a set packing of size $k$ 
in $T'$.  Consider any two arbitrary elements $a_i,a_j$ from $S_1$. This implies $a_i$ and $a_j$
do not simultaneously occur in any of the subsets in $S$ which in turn implies that $s'_{i}\cap s'_{j}=\phi$ 
in $S'$ (by construction). So it is safe to add $s'_i$ and $s'_j$ to $S_2$.  Hence $S_2$ is a set packing of size 
$k$ in $T'$.  Thus $\overline{SC}$ is reducible to the set packing problem. 

\textit{The set packing problem is reducible to $\overline{SC}$:} Given a set packing instance $T=(U,S)$ 
where, $U=\{a_{1},a_{2},..,a_{n}\}$
and a collection of subsets $S=\{s_{1},s_{2},....s_{m}\}$, we can construct
an $\overline{SC}$ instance $T'=(U',S')$ such that $U'=\{s_{1},s_{2},..,s_{m}\}$
and $S'=\{s'_{1},s'_{2},....s'_{n}\}$ where, 
$s'_{i}=\{s_{j}:a_{i}\in s_{j}, 1\leq j\leq m\}$, for each $1\leq i\leq n$. 

Any set packing in $T$ will be feasible for $\overline{SC}$ in $T'$: 
Let $S_1=\{s_{1},s_{2},...,s_{k}\}$ be a feasible set packing in $T$ with $|S_1|=k$ and suppose it is not feasible 
for $\overline{SC}$ in $T'$.  
Then there are at least two elements, $s_i$ and $s_j$ in $S_1$ such that $s_i,s_j \in s'_p$ for some $p$ 
$(1\leq p\leq n)$, which implies that $p \in s_i$ and $p \in s_j$ (by construction).  This again implies that
 $p \in s_i\cap s_j$ and $S_1$ is not a set packing in $T$.  Hence contradiction.

Any feasible solution for $\overline{SC}$ in $T'$ will be a set packing in $T$: Let  $S_2=\{s_1,s_2,....s_k\}$ be feasible 
for $\overline{SC}$ in $T'$ with $|S_1|=k$ and suppose it is not a set packing in $T$.  
This implies that there is some $s_i,s_j$ in 
$S_2$ such that $p \in s_i\cap s_j$ for some $p$ $(1\leq p\leq n)$ which again implies that $s_i,s_j \in s'_p$ 
(by construction).  
So $S_2$ is not a feasible solution for $\overline{SC}$ in $T'$.  Hence contradiction. 
\end{proof}
Applying  the Fact~\ref{lema:11} and Theorem~\ref{theorem:1}, we get the
following corollaries.  Let $T=(U,S)$ be an instance of  $\overline{SC}$.

\begin{Corollary}\label{Coro:1}

For any $\epsilon>0$, $\overline{SC}$ is not approximable within a factor 
$|S|^{\frac{1}{2}-\epsilon}$ unless $P=NP$ and $|U|^{1-\epsilon}$ unless $NP=ZPP$.  

\end{Corollary}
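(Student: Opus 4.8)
The plan is to read off the corollary from the reduction already built in the proof of Theorem~\ref{theorem:1} together with Fact~\ref{lema:11}(2). First I would record the precise behaviour of the reduction from the set packing problem to $\overline{SC}$ constructed above: starting from a set packing instance $(U,S)$ it outputs an $\overline{SC}$ instance $(U',S')$ in which the ground set $U'=\{s_1,\dots,s_m\}$ has exactly $|S|$ elements and the collection $S'=\{s'_1,\dots,s'_n\}$ has exactly $|U|$ members, and—as shown in that proof—a feasible solution of size $k$ for set packing in $(U,S)$ corresponds to a feasible solution of size $k$ for $\overline{SC}$ in $(U',S')$ and conversely. Hence the two optimal values coincide exactly, and any $\alpha$-approximation algorithm for $\overline{SC}$ on $(U',S')$ immediately yields an $\alpha$-approximation for set packing on $(U,S)$.

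Next I would transport the known hardness through this reduction. By Fact~\ref{lema:11}(2), the set packing problem cannot be approximated within a factor $|U|^{\frac12-\epsilon}$ unless $P=NP$, nor within $|S|^{1-\epsilon}$ unless $NP=ZPP$. Renaming the resulting $\overline{SC}$ instance as $T=(U,S)$ and substituting the size correspondences $|U'|=|S|$ and $|S'|=|U|$, the first bound becomes inapproximability of $\overline{SC}$ within $|S|^{\frac12-\epsilon}$ unless $P=NP$, and the second becomes inapproximability within $|U|^{1-\epsilon}$ unless $NP=ZPP$, which is exactly the claimed statement. If an approximation with such a ratio existed for $\overline{SC}$, composing with the polynomial-time reduction would contradict Fact~\ref{lema:11}(2).

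Finally I would verify that nothing degrades along the way: the reduction runs in polynomial time, and because both directions preserve solution cardinalities \emph{exactly} (not merely up to a multiplicative constant), the approximation ratio carries over verbatim and no $\epsilon$ is lost in the exponent. The only point needing care—and the main, essentially bookkeeping, obstacle—is the swap of roles between the ground set and the set collection effected by the reduction: one must make sure the exponent $\tfrac12-\epsilon$ attaches to $|S|$ and the exponent $1-\epsilon$ attaches to $|U|$ in the $\overline{SC}$ instance, rather than the other way round, which is why the $\overline{SC}$ hardness appears ``transposed'' relative to the way Fact~\ref{lema:11}(2) is stated for set packing.
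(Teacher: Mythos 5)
Your proposal is correct and follows essentially the same route as the paper, which obtains the corollary by pushing the hardness of Fact~\ref{lema:11}(2) through the set-packing-to-$\overline{SC}$ direction of the reduction in Theorem~\ref{theorem:1}; your bookkeeping of the $U$/$S$ role swap (ground set of the $\overline{SC}$ instance has size $|S|$, collection has size $|U|$, solutions correspond exactly) is exactly what makes the exponents land on the right parameters.
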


\begin{Corollary}\label{Coro:2}
 $\overline{SC}$ is $\sqrt{|S|}$ approximable.
\end{Corollary}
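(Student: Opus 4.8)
The plan is to read the bound off the reduction already exhibited in the proof of Theorem~\ref{theorem:1}, combined with Fact~\ref{lema:11}(1). The essential point is that the reduction from $\overline{SC}$ to the set packing problem constructed there is not merely a decision reduction but is cardinality-exact in both directions, so it carries approximation guarantees over; all that is left is to track how the instance sizes are relabelled.

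I would start by fixing an instance $T=(U,S)$ of $\overline{SC}$ with $|S|=m$ and applying the construction of Theorem~\ref{theorem:1} to obtain the set packing instance $T'=(U',S')$ with $U'=\{s_1,\dots,s_m\}$ and $S'=\{s'_1,\dots,s'_n\}$. The key arithmetic observation is that the universal set of the resulting set packing instance satisfies $|U'|=|S|=m$. From the proof of Theorem~\ref{theorem:1}, every feasible $\overline{SC}$ solution for $T$ of size $k$ corresponds to a set packing for $T'$ of size $k$ and vice versa; in particular $\mathrm{OPT}_{\mathrm{SP}}(T')=\mathrm{OPT}_{\overline{SC}}(T)$, and any feasible set packing for $T'$ can be turned, in polynomial time, into a feasible $\overline{SC}$ solution for $T$ of the same cardinality.

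Next I would invoke Fact~\ref{lema:11}(1) to run the $\sqrt{|U'|}$-approximation algorithm for set packing on $T'$, producing a set packing $S_1$ with $|S_1|\ge \mathrm{OPT}_{\mathrm{SP}}(T')/\sqrt{|U'|}$, and then pull $S_1$ back through the reduction to a feasible $\overline{SC}$ solution $S_2$ for $T$ with $|S_2|=|S_1|$. Combining this with $|U'|=|S|$ and $\mathrm{OPT}_{\mathrm{SP}}(T')=\mathrm{OPT}_{\overline{SC}}(T)$ yields $|S_2|\ge \mathrm{OPT}_{\overline{SC}}(T)/\sqrt{|S|}$, which is exactly the claimed $\sqrt{|S|}$-approximation.

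I do not expect any genuine obstacle: the corollary is really just a matter of confirming that both directions of the Theorem~\ref{theorem:1} reduction are constructive and size-preserving (which they are, as written) and of being careful with the relabelling $|U'|=|S|$, so that the $\sqrt{|U|}$ guarantee for set packing becomes a $\sqrt{|S|}$ guarantee for $\overline{SC}$. An entirely parallel argument, using the reverse reduction together with Fact~\ref{lema:11}(2), gives Corollary~\ref{Coro:1}.
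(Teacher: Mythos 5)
Your proposal is correct and is exactly the argument the paper intends: the paper derives Corollary~\ref{Coro:2} directly from Fact~\ref{lema:11}(1) together with the size-preserving reduction of Theorem~\ref{theorem:1}, whose set packing instance has universe $U'=S$, so the $\sqrt{|U'|}$ guarantee becomes $\sqrt{|S|}$. You have simply spelled out the bookkeeping ($|U'|=|S|$, equality of optima, constructive pull-back) that the paper leaves implicit.
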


We will be using the following theorem to prove the equivalence 
of $\overline{SC}$, $\overline{EDS}$ and $\overline{DS}$.
\begin{Theorem} \label{thm:1}
(Hall \cite[p.106]{harris2008combinatorics}) Let $G$ be a bipartite graph with
partite sets $X$ and $Y$.  $X$ can be matched into $Y$ iff $|S|\leq|N(S)|$ for all subsets
$S$ of $X$ (Here $N(S)=\{ y\in Y|(x,y)\in E(G), \forall x\in S\}$).
\end{Theorem}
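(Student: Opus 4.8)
The plan is to dispatch the easy implication directly and the substantive one by induction on $|X|$. If $X$ can be matched into $Y$, the matching restricts, for every $S\subseteq X$, to an injection of $S$ into $N(S)$, so $|S|\le|N(S)|$; this settles the ``only if'' direction. For the converse, assume $|S|\le|N(S)|$ for all $S\subseteq X$ and induct on $|X|$. The base case $|X|=1$ is immediate, since $|N(\{x\})|\ge 1$ gives an edge at $x$ to match. For $|X|\ge 2$ I would split the inductive step into two cases according to whether some proper nonempty subset of $X$ is \emph{tight}, i.e.\ satisfies $|N(S)|=|S|$.

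In the first case, suppose $|N(S)|\ge|S|+1$ for every nonempty $S\subsetneq X$. Pick an arbitrary edge $(x,y)$ and delete both $x$ and $y$. Deleting $y$ shrinks each neighbourhood by at most one, so for every $S\subseteq X\setminus\{x\}$ we still have $|N(S)|\ge(|S|+1)-1=|S|$; the induction hypothesis then yields a matching saturating $X\setminus\{x\}$, and adjoining $(x,y)$ produces one saturating $X$.

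In the second case, fix a nonempty $S\subsetneq X$ with $|N(S)|=|S|$ and partition the problem along $S\cup N(S)$. The subgraph induced on $S\cup N(S)$ still satisfies Hall's condition, so since $|S|<|X|$ the induction hypothesis gives a matching $M_1$ saturating $S$, which also saturates $N(S)$ as the two sets have equal size. On the complementary graph $G_2$ with parts $X\setminus S$ and $Y\setminus N(S)$, for any $T\subseteq X\setminus S$ one has $N_{G_2}(T)=N(T)\setminus N(S)$ and $N(S\cup T)=N(S)\cup N(T)$, whence $|S|+|T|\le|N(S\cup T)|=|N(S)|+|N(T)\setminus N(S)|=|S|+|N_{G_2}(T)|$ and therefore $|N_{G_2}(T)|\ge|T|$. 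Induction supplies a matching $M_2$ saturating $X\setminus S$ in $G_2$, and $M_1\cup M_2$ is the desired matching of $X$ into $Y$.

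The step I expect to be the main obstacle is the verification in the second case that $G_2$ inherits Hall's condition: it hinges on correctly identifying $N_{G_2}(T)$ with $N(T)\setminus N(S)$ and on using the disjointness of $S$ and $T$ to turn $|N(S\cup T)|\ge|S\cup T|$ into $|N_{G_2}(T)|\ge|T|$. Should the case analysis prove cumbersome, an alternative is to derive the statement from K\"onig's minimax theorem or from the augmenting-path characterization of maximum matchings, but the induction above is self-contained and shortest.
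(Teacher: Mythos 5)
Your proof is correct: it is the classical tight-set (Halmos--Vaughan) induction, and both cases go through --- in particular the step you flag as the main obstacle is handled correctly, since for disjoint $S$ and $T$ one has $N(S\cup T)=N(S)\cup N(T)$, so $|N(S)|+|N_{G_2}(T)|=|N(S\cup T)|\ge |S|+|T|$ and tightness of $S$ gives $|N_{G_2}(T)|\ge |T|$. The comparison with the paper is moot in one respect, though: the paper does not prove this statement at all. It quotes Hall's theorem from the cited textbook and uses it purely as a black box (to extract a matching saturating $U'$ in the reduction from $\overline{SC}$ to $\overline{EDS}$). So what you have written is not a different route but a proof where the paper supplies none; your self-contained induction is the shortest standard argument, and the alternatives you mention (K\"onig's theorem or augmenting paths) would prove the same thing at the cost of extra machinery. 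One detail worth noting: the paper's parenthetical definition of $N(S)$, with the quantifier ``$\forall x\in S$'', literally describes the common neighbourhood of $S$, under which reading the theorem is false (two disjoint edges already give a matching of $X$ into $Y$ while $N(X)=\emptyset$); the intended and standard meaning is the union $\bigcup_{x\in S}N(x)$, i.e.\ ``for some $x\in S$'', which is exactly the definition your proof uses, so your reading is the right one.
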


\section{Equivalence of $\overline{SC}$, $\overline{EDS}$ and
$\overline{DS}$}
\begin{Theorem}\label{thm:2}
 $\overline{SC}$, $\overline{EDS}$ and $\overline{DS}$ are equivalent with
respect to hardness of approximation.
\end{Theorem}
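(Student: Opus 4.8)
The plan is to prove Theorem~\ref{thm:2} by composing a cycle of polynomial-time reductions $\overline{SC}\rightarrow\overline{EDS}\rightarrow\overline{DS}\rightarrow\overline{SC}$, each of which is \emph{solution preserving} in the strong sense that it induces a cardinality-preserving bijection between the feasible solutions of the source instance and those of the target instance; in particular the two optima are equal. For reductions of this strength, any $\rho(\cdot)$-approximation for one problem becomes a $\rho(\cdot)$-approximation for the next once the instance-size parameter is translated, and, since the size of a solution is never altered, they are also FPT reductions for the standard parameterization. Carrying the cycle around therefore gives $\overline{SC}\equiv\overline{EDS}\equiv\overline{DS}$ with respect to hardness of approximation.

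Two of the three reductions are immediate. For $\overline{EDS}\rightarrow\overline{DS}$, note that for a graph $G$ the set $N(e)$ occurring in the definition of $\overline{EDS}$ is exactly the closed neighbourhood of the vertex $e$ in the line graph $L(G)$; hence a set of edges is feasible for $\overline{EDS}$ on $G$ iff, viewed as a set of vertices, it is feasible for $\overline{DS}$ on $L(G)$, so the optima agree and $|V(L(G))|=|E(G)|$. For $\overline{DS}\rightarrow\overline{SC}$, from $G=(V,E)$ take the universe $U=V$ and the family $S=\{N(v):v\in V\}$: two vertices $u,w$ lie together in a member $N(x)$ exactly when $x\in N(u)\cap N(w)$, so ``no two elements of $U'$ lie together in a member of $S$'' is literally the condition ``no two vertices of $U'$ share a neighbour'', and $|U|=|S|\le|V|$.

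The substantial step is $\overline{SC}\rightarrow\overline{EDS}$. Given $(U,S)$ I would first pass to the co-occurrence graph $H$ on $\{x_1,\dots,x_n\}$ with $x_ix_j\in E(H)$ iff $a_i,a_j$ appear together in some member of $S$; a subset of $U$ is $\overline{SC}$-feasible iff the corresponding vertices form an independent set of $H$, so the $\overline{SC}$-optimum equals the independence number of $H$. Now build $G'$ from $H$ by attaching one pendant edge $e_i$ at each vertex $x_i$. A direct check shows that for $i\neq j$ one has $N(e_i)\cap N(e_j)\neq\phi$ iff $x_ix_j\in E(H)$, so the pendant edges reproduce $H$ inside the conflict relation of $\overline{EDS}(G')$, and a feasible $\overline{SC}$-solution yields an edge packing of $G'$ of the same size. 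The point that needs care is the reverse inequality: ruling out that an edge packing of $G'$ does better than the independence number of $H$ by exploiting the non-pendant (``structural'') edges of $G'$. Here I would observe that the structural edges appearing in such a packing are pairwise non-conflicting, hence pairwise non-adjacent in $H$, i.e. form a matching in $H$, and then replace each of them by one of its endpoints; the natural way to certify that this can be done with distinct, mutually non-conflicting pendant edges is a bipartite matching argument, which is exactly the role of Hall's theorem (Theorem~\ref{thm:1}). This yields $\overline{EDS}(G')$ equal to the $\overline{SC}$-optimum, with $|V(G')|=2n$.

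I expect the only genuine difficulty to be this last verification --- showing that the extra ``structural'' edges of $G'$ give the $\overline{EDS}$-instance no advantage; the remainder is bookkeeping. To finish, one composes the three reductions and tracks the (polynomial) relationship between the size parameters around the cycle --- $|V(L(G))|=|E(G)|$, $|U|=|S|\le|V|$, $|V(G')|=2n$ --- so that the quantitative statements of Corollary~\ref{Coro:1} and Fact~\ref{lema:11}, in particular non-approximability within $|V|^{\frac{1}{2}-\epsilon}$ unless $NP=ZPP$, propagate from $\overline{SC}$ (equivalently the set packing problem) to $\overline{EDS}$ and $\overline{DS}$.
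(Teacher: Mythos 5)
Your proposal is correct, and while two of your reductions ($\overline{EDS}\rightarrow\overline{DS}$ via the line graph and $\overline{DS}\rightarrow\overline{SC}$ via closed neighbourhoods) coincide with Parts (iv) and (iii) of the paper's proof, your treatment of the key step $\overline{SC}\rightarrow\overline{EDS}$ is genuinely different. The paper builds its graph on $U\cup S$, keeping the membership edges $(a_i,s_p)$, and must invoke Hall's theorem (Theorem~\ref{thm:1}) to convert a feasible $\overline{SC}$ set $U'$ into a matching of $U'$ into $S$, which is then shown to be an edge packing; conversely it projects an edge packing to one $U$-endpoint per edge. In your construction (co-occurrence graph $H$ plus one pendant edge per vertex) the forward direction is immediate, and --- contrary to what you expect --- the converse needs no Hall-type argument either: if $F$ is an edge packing of $G'$, the pendant edges of $F$ have pairwise nonadjacent attachment vertices; the structural edges of $F$ avoid those attachment vertices, are pairwise vertex-disjoint, and admit no $H$-edge between two of their endpoints or from one of their endpoints to such an attachment vertex (any offending edge would lie in the intersection of the two closed edge-neighbourhoods, contradicting the packing property); hence choosing an arbitrary endpoint of each structural edge already yields an independent set of $H$, i.e.\ an $\overline{SC}$-feasible set, of size $|F|$. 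So the bipartite-matching machinery you flag as the delicate point is precisely what your route makes unnecessary; Hall's theorem is needed only for the paper's construction. Your cycle of three optimum-preserving reductions suffices for the equivalence (the paper's direct Part (ii) reduction $\overline{SC}\rightarrow\overline{DS}$ follows by composition), and it buys something extra: since $|V(G')|=2|U|$, combining your reduction with the $|U|^{1-\epsilon}$ bound of Corollary~\ref{Coro:1} gives inapproximability of $\overline{EDS}$ within $|V|^{1-\epsilon}$ unless $NP=ZPP$, sharper than the $|V|^{\frac{1}{2}-\epsilon}$ factor of Corollary~\ref{cor:2} (the paper's graph has $|V|=m+n$), and in fact it settles, in the negative, the paper's closing question of whether $\overline{EDS}$ is $O(|V|^{\frac{1}{2}})$-approximable. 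One small overstatement: your maps are size-preserving in both directions but not literally bijections on feasible solutions; the weaker two-way correspondence is all that is needed and is what you actually use.
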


\begin{proof}
 We split the proof into four parts.  The consequences of intermediate results
are noted as corollaries.



\textbf{Part (i)} \textit{Reduction from $\overline{SC}$ to $\overline{EDS}$:}
Given an $\overline{SC}$ instance $T=(U,S)$ where,
$U=\{a_{1},a_{2},.....a_{n}\}$ and
$S=\{s_{1},s_{2},....s_{m}\}$. Construct
a graph $G=(V,E)$ where, 
$V=U\cup S$ and $E=\{(a_{i},a_{j})|\exists s_{p}\in S:a_{i}\in s_{p}$ and $ a_{j}\in s_{p}\}\cup\{(a_{i},s_{p})|a_{i}\in s_{p}\}$.

Let $U'=\{a_1,a_2,...a_k\}$ be a solution of size $k$ for $\overline{SC}$ in $T$.  Let $N(U')=\{ v' \in V:(u',v')\in E, 
\forall u'\in U'\}$ 
Consider the subgraph $G'=(V',E')$ of $G$ where, $V'={U'\cup\{N(U')\cap S\}}$ and 
$E'=\{(a_i,s_j)\in E|a_i\in U';s_j \in N(U')\cap S\}$.

\begin{lemma}For any $U''\subseteq U'$, $|U''|\leq|N(U'')|$ in $G'$.\end{lemma}
\begin{proof}Suppose $ |U''|>|N(U'')|$, then $\exists a_i,a_j,s_q$ such that $(a_i,s_q)\in E'$ and $(a_j,s_q)\in E'$ 
(pigeon hole principle) which means $U'$ is not a packing in $T$. Hence contradiction.
\end{proof}

So for any $U''\subseteq U'$, $|U''|\leq|N(U'')|$ in $G'$.  Now, by Theorem~\ref{thm:1}, 
there exists a complete matching saturating every vertex of 
$U'$ in $G'$. 
Let $M$ be such a matching of size $k$.  
The following lemma shows that $M$ yields the
desired edge packing in $G$.
\begin{lemma} \label{lemma:6}
$M$ is an edge packing in $G$ with $|M|=k$.\end{lemma} 
\begin{proof}Suppose $M$ is not an edge packing in $G$.  
Consider any two arbitrary edges $(a_i,s_q)$ and $(a_j,s_{q'})$ from 
$M$.  As $(s_q,s_{q'}) \notin E$ and $s_q\neq s_{q'}$ for any matching $M$, there is only one possibility for $M$ not to be 
an edge packing: $(a_i,a_j)\in E$.  If $(a_i,a_j)\in E$ then $\exists s_p$ in $S$ such that $a_i,a_j\in s_p$.  
But as $U'$ is a set packing, $a_i$ and $a_j$ cannot simultaneously occur in $U'$ yielding a contradiction.  
Hence $M$ is an edge packing in $G$ with $|M|=k$. 
\end{proof}
 Next we will prove the corresponding converse.  
Let $F$ be an edge packing of size $k$ in $G$. Define $P=\{a_i:(a_i,a_j)\in F,i<j\}\cup\{a_j:(a_j,s_q)\in F\}$.  i.e., $P$ 
consists of one endpoint, which is an element of $U$, for every edge in $F$.

\begin{lemma} \label{lemma:7}
$P$ is feasible for $\overline{SC}$ in $T$ and $|P|=k$.\end{lemma} 
\begin{proof}Suppose $P$ is not feasible for $\overline{SC}$ in $T$, then $\exists a_i,a_j\in P$ and $s_q \in S$ such that 
$a_i \in s_q$ and $a_j \in s_q$ and hence by construction of $G$, $(a_i,a_j)\in E(G)$.  Now, $a_i,a_j \in P$ means 
there exists $\alpha,\beta \in U\cup S$
 such that $(a_i,\alpha) \in F$ and $(a_j,\beta)\in F$. As $(a_i,\alpha)$ and $(a_j,\beta)$ are 
two neighbours of $(a_i,a_j)$, $F$ will not be an edge packing of $G$, which is a contradiction.
\end{proof}


\begin{Corollary} \label{cor:2}
 For any $\epsilon > 0$, the edge packing problem ($\overline{EDS}$) is asymptotically not approximable
 within a factor of $|V|^{\frac{1}{2} - \epsilon}$ unless $NP=ZPP$.
\end{Corollary}
\begin{proof}
Let $T=(U,S)$ be an $\overline{SC}$ instance with $|U|=n$ and $|S|=m$.  
By the above reduction we get an edge packing instance $G=(V,E)$ where, $|V|=m+n$.  
Let $K$ be the size of an optimal solution 
for $\overline{SC}$.  By lemma~\ref{lemma:6}, the optimal edge packing in $G$ will also have size $K$.  Hence a $|V|^\alpha$ factor approximation algorithm
for the edge packing problem yields an $\overline{SC}$ solution of size at least $K/(m+n)^\alpha$. 
 By Corollary~\ref{Coro:1}, $\overline{SC}$ is not approximable below a factor 
$m^{\frac{1}{2}-\frac{\epsilon}{2}}$ unless $P=NP$ and $n^{1-\epsilon}$ unless $NP=ZPP$ for any fixed $\epsilon > 0$.  
This implies that, $\alpha\geq \frac{1}{2}(1-\epsilon)max\{\frac{\log m}{\log (m+n)},\frac{2\log n}{\log (m+n)}\}$.  
Thus $\alpha \geq \frac{1}{2}-\epsilon$ for sufficiently large $n$.  
Hence asymptotically a factor of approximation within $|V|^{\frac{1}{2}-\epsilon}$ 
is not achievable for edge packing problem for any $\epsilon>0$, unless $NP=ZPP$.
\end{proof}

\textbf{Part (ii)}\textit{  Reduction from $\overline{SC}$ to $\overline{DS}$:}
Given an
$\overline{SC}$ instance $T=(U,S)$ where, $U=\{a_{1},a_{2},..,a_{n}\}$ and
$S=\{s_{1},s_{2},....s_{m}\}$.  
Construct a graph $G=(V,E)$ such that $V=U \cup S$
and $E=\{(a_i,s_{j}):a_i\in s_{j}\}\cup\{(s_{i},s_{j}):s_{i} \cap s_{j}\neq \emptyset \}$.  Let 
$V'$ be a feasible solution of size $k$ for $\overline{DS}$ in G, then we can construct $U'\subseteq U$ of size $k$ feasible 
for $\overline{SC}$ as follows: for each $a_{i}\in V'$, include $a_{i}$ to $U'$ and for each $s_{j}\in V'$, 
choose any one $a_{i}\in s_{j}$ and include it to $U'$.

\begin{lemma}
 $U'$ is feasible for $\overline{SC}$ in T.
\end{lemma}
\begin{proof} 
 Suppose not.  Then there exists $a_i,a_j$ in $U'$ and $s_p$ in $S$ 
such that $a_i\in s_p$ and $a_j\in s_p$.  This means $(s_p,a_i),(s_p,a_j)\in E$. Thus the sets $N(a_i),N(a_j)$ and 
$N(s_p)$ are not pairwise disjoint in $G$. Now by construction of $V'$, $a_i,a_j\in U'$ implies either $a_i,a_j\in V'$ or 
 $a_i,s_p\in V'$ or  $a_j,s_p\in V'$.  But as $N(a_i),N(a_j)$ and $N(s_p)$ are not pairwise disjoint in $G$, 
all the above cases imply that $V'$ is not a vertex packing in $G$, a contradiction.   
\end{proof}
For the converse, let $U'=\{a_1,a_2,....,a_k\}$ be a feasible solution for $\overline{SC}$ of size $k$ in T.
\begin{lemma}\label{lemma:8}
 $U'$ is a feasible vertex packing in $G$.
\end{lemma}
\begin{proof}
 Suppose not.  Then $\exists a_i,a_j \in U'$ such that $N(a_i)\cap N(a_j)\neq \phi$.  
Hence $\exists s_p \in S$ such that $a_i,a_j \in s_p$.  
But then $U'$ is not a solution for $\overline{SC}$ - a contradiction.  
\end{proof}
   Thus $\overline{SC}$ is reducible to the vertex packing. Arguing exactly as in Corollary \ref{cor:2}
we have:

\begin{Corollary} \label{cor:3}
 For any $\epsilon > 0$, the vertex packing problem ($\overline{DS}$) is asymptotically not approximable
 within a factor of $|V|^ {\frac{1}{2} - \epsilon}$ unless $NP=ZPP$
\end{Corollary}

\textbf{Part (iii)}\textit{ Reduction from $\overline{DS}$ to $\overline{SC}$:}
Given a
graph $G=(V,E)$, we can map the $\overline{DS}$ problem to 
the $\overline{SC}$ problem by constructing an $\overline{SC}$ instance
$T=(U,S)$ such that
 $U=V$ and $S=\{s_{i}=N(i): i\in V\}$.  $S'\subseteq U$ is a feasible solution
for $\overline{SC}$ in $T$ if and only if
for every $v_{i},v_{j} \in S'$, $N(v_{i})\cap N(v_{j})=\phi$ if and only if $S'$
is a vertex packing in $G$. 
\begin{Corollary}\label{lemma:5}
 The vertex packing problem is $\sqrt{|V|}$ approximable.
\end{Corollary}
\begin{proof}
Given a vertex packing problem instance $G=(V,E)$, the above reduction gives an $\overline{SC}$ instance $T=(U,S)$ 
(where $|S|=|V|$) and by Corollary~\ref{Coro:2}, $\overline{SC}$ is $\sqrt{|S|}$ approximable.  
So the vertex packing problem is $\sqrt{|V|}$ approximable.
\end{proof} 

\textbf{Part (iv)}\textit{ Reduction from $\overline{EDS}$ to $\overline{DS}$:}
Given a
graph $G=(V,E)$ (where, $|V|= n$ and $|E|=m$), we can construct a line graph
$G'=(V',E')$ of $G$.   i.e., $V'= E$ and $E'=\{(e,e') : e\in E, 
e'\in N(e)\setminus \{e\} \}$.

Any solution for $\overline{EDS}$ in $G$ will be a solution for $\overline{DS}$ in $G'$. 
 Let $T$ be a feasible edge packing in $G$ and $e_i$ and $e_j$ be any two elements in $T$.  
This implies $N(e_i)\cap N(e_j)=\phi$ which again implies that $T$ is a feasible vertex packing in $G'$.  
Similarly any vertex packing in $G'$ will be an edge packing in $G$. This
completes the proof.

\begin{Corollary} \label{lemma:3}
 The edge packing problem is $\sqrt{|E|}$ approximable.
\end{Corollary}
\begin{proof}
 Given an $\overline{EDS}$ instance $G=(V,E)$  (where, $|V|= n$ and $|E|=m$), we can construct a $\overline{DS}$ instance 
$G'=(V',E')$ (where $|V'|=|E|$) and by Corollary~\ref{lemma:5}, $\overline{DS}$
is $\sqrt{|V|}$ approximable. 
So $\overline{EDS}$ is $\sqrt{|E|}$ approximable.
\end{proof}
This completes the proof of Theorem~\ref{thm:2}.
\end{proof}



\section{Summary and Conclusion}
We summarize the observations proved in the previous sections into the following theorem.
\begin{Theorem}
Let $G=(V,E)$ be any graph.
\begin{enumerate}
 \item  For any $\epsilon > 0$, the edge packing problem ($\overline{EDS}$) is asymptotically not approximable
 within a factor of $|V|^{\frac{1}{2} - \epsilon}$ unless $NP=ZPP$.
\item The edge packing problem is $\sqrt{|E|}$ approximable.
\item For any $\epsilon > 0$, the vertex packing problem ($\overline{DS}$) is asymptotically not approximable
 within a factor of $|V|^{\frac{1}{2} - \epsilon}$ unless $NP=ZPP$.
\item The vertex packing problem is $\sqrt{|V|}$ approximable.
\item The edge packing problem and the vertex packing problem are W[1]-complete under standard parameterization.
\end{enumerate}
\end{Theorem}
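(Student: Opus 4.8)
The plan is to observe that the statement is purely a consolidation of results already established, so the work is bookkeeping rather than new argument. Items~1 through~4 are restatements of Corollary~\ref{cor:2}, Corollary~\ref{lemma:3}, Corollary~\ref{cor:3}, and Corollary~\ref{lemma:5} respectively, so for those I would simply cite the relevant corollary proved in Section~3 and say nothing further.

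The only item needing a genuine (if short) argument is item~5. Here I would proceed in two steps, hardness and membership, for each of $\overline{EDS}$ and $\overline{DS}$. First I would point out that every reduction used in Theorem~\ref{theorem:1} and in Parts (i)--(iv) of Theorem~\ref{thm:2} is \emph{parameter preserving}: in each case a feasible solution of size $k$ is transformed into a feasible solution of size exactly $k$, and the construction runs in polynomial time. Consequently each of these reductions is an FPT reduction in the sense of \cite{chen1993fixed}. For W[1]-hardness of $\overline{EDS}$, I would chain set packing $\leq_{FPT} \overline{SC}$ (Theorem~\ref{theorem:1}) $\leq_{FPT} \overline{EDS}$ (Part (i)); for W[1]-hardness of $\overline{DS}$, chain set packing $\leq_{FPT} \overline{SC}$ $\leq_{FPT} \overline{DS}$ (Part (ii)); in both cases Fact~\ref{lema:11}(3) supplies the W[1]-hardness of the source problem. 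For W[1]-membership, I would chain in the other direction: $\overline{EDS} \leq_{FPT} \overline{DS}$ (Part (iv)) $\leq_{FPT} \overline{SC}$ (Part (iii)) $\leq_{FPT}$ set packing (Theorem~\ref{theorem:1}), and $\overline{DS} \leq_{FPT} \overline{SC} \leq_{FPT}$ set packing, again using Fact~\ref{lema:11}(3) for membership of set packing in W[1]. Combining the two directions gives W[1]-completeness of both problems under the standard parameterization.

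The step I expect to require the most care is verifying that the reductions are parameter preserving \emph{as stated}, rather than merely approximation preserving. In Part~(i), for instance, the correspondence between an $\overline{SC}$ solution of size $k$ and an edge packing of size $k$ relies on Hall's theorem (Theorem~\ref{thm:1}) to produce a matching of size exactly $k$, and Lemmas~\ref{lemma:6} and~\ref{lemma:7} give both directions of the size-$k$ correspondence; I would make explicit that this bijective-on-sizes behaviour is exactly what an FPT reduction needs. The remaining reductions (Parts (ii)--(iv) and Theorem~\ref{theorem:1}) are already phrased as ``solution of size $k$ $\leftrightarrow$ solution of size $k$'', so for those the FPT-reduction property is immediate and I would only note it. No other obstacle is anticipated: all constructions are clearly polynomial time, so there is no blow-up in the parameter and nothing further to check.
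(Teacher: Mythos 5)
Your proposal is correct and takes essentially the same route as the paper: items 1--4 are obtained by citing the four corollaries, and item 5 follows from observing that all the reductions are parameter preserving (hence FPT reductions) and invoking the W[1]-completeness of set packing from Fact~\ref{lema:11}. Your explicit two-directional chaining of the reductions for hardness and membership is just a more detailed spelling-out of the paper's one-line argument that the problems are equivalent under FPT reductions.
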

\begin{proof}
1, 2, 3 and 4 follows from Corollary~\ref{cor:2}, Corollary~\ref{lemma:3},
Corollary~\ref{cor:3} and Corollary~\ref{lemma:5} 
respectively.  The reductions that proved the equivalence of the vertex packing, 
the edge packing and the set packing problems were parameter preserving under standard parameterization.  Thus all these problems are 
equivalent under FPT reductions.  By Fact~\ref{lema:11}, set packing problem is
W[1]-complete.  Hence 5 follows. 
\end{proof}
In summary, we have shown that $\overline{SC}$, Vertex packing and Edge packing are equivalent to the Set packing problem.  
All reductions presented preserve both 
approximability and parameter, and take only linear time.  
As a consequence edge packing is $\sqrt{|E|}$ approximable and vertex packing is $\sqrt{|V|}$ approximable. 
Both problems are $W[1]$-complete under standard parameterization.  The question of whether $\overline{EDS}$ 
is $O(|V|^{\frac{1}{2}})$ approximable is not known.  Developing good exact algorithms for these problems 
remain open for future investigation.
\bibliographystyle{elsarticle-num}
\bibliography{simple}







\end{document}